\newtheorem{theorem}{Theorem}[section]
\newtheorem{lemma}[theorem]{Lemma}
\title{Density
Estimation on the Binary Hypercube using Transformed Fourier-Walsh Diagonalizations}
\author{ \href{https://orcid.org/0000-0003-0178-7709}{\includegraphics[scale=0.06]{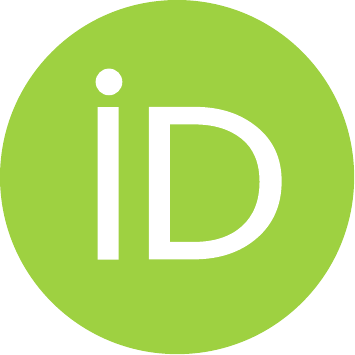}\hspace{1mm}Arthur C. Campello} \\
	Department of Applied Physics\\
	Stanford University\\
	Stanford, CA 94305\\
\texttt{arthur.campello@stanford.edu} \\
}
\begin{document}
\maketitle

\begin{abstract}
This article focuses on estimating distribution elements over a high-dimensional binary hypercube from multivariate binary data. A popular approach to this problem, optimizing Walsh basis coefficients, is made more interpretable by an alternative representation as a "Fourier-Walsh" diagonalization. Allowing monotonic transformations of the resulting matrix elements yields a versatile binary density estimator: the main contribution of this article. It is shown that the Aitchison and Aitken kernel emerges from a constrained exponential form of this estimator, and that relaxing these constraints yields a flexible variable-weighted version of the kernel that retains positive-definiteness. Estimators within this unifying framework mix together well and span over extremes of the speed-flexibility trade-off, allowing them to serve a wide range of statistical inference and learning problems.
\end{abstract}

\keywords{density estimation \and binary hypercube \and Walsh basis \and Aitchison and Aitken kernel \and diagonalization \and positive-definite kernel}

\section{Introduction}
Though less intuitive than continuous data, high-dimensional binary data appear ubiquitously in modern statistical learning and artificial intelligence. In medicine, critical pieces of information ranging from ocular features \cite{AA1976} to drug trial data to gene expression \cite{SZ2002} take binary forms. Other common binary data types include connectivity and node activation in social and epidemiological networks, survey data, and multivariate binary time series \cite{JR2019}. In machine learning and artificial intelligence (AI) applications, binary data often contain important feature information gleaned from more complex data sets. Binary features have long been used in word image retrieval \cite{ZSH2003} and are used in learning models for facial recognition \cite{LLZZ2015} and fall detection \cite{SJ2018}. Progress in drawing insights from binary data advances both areas dealing with direct and intermediate-form binary information.

This article addresses the estimation of multivariate binary densities from independent observations of $n$-dimensional binary variables whose support is the $\{-1,1\}^n$ hypercube. These densities elucidate complex dependencies across variables and inform conditional probabilities that directly serve statistical learning applications. While estimating all $2^n$ probabilities over the hypercube becomes prohibitively expensive for large $n$, small subsets can still provide strong insights about a dataset. From only two density elements, for example, one can find the expected value of a binary response variable conditional on a specific input from multiple regressor variables. 

In applied settings, the effectiveness of a density estimation scheme can be assessed by three metrics: speed, interpretability, and flexibility. Although these qualities often trade off, practical estimators should run in reasonable time, make inferences in understandable ways, and adapt well to datasets of varying sizes, dimensions, and sparsities. An early triumph in binary density estimation with these attributes came from Aitchison and Aitken's (AA) kernel \cite{AA1976}. The AA kernel measures the proximity between two binary vectors by scaling and exponentiating the number of agreeing indexes between them. For dimension $n$ and number of observations $N$ The approach needs only $O(nN)$ time to estimate a density component. Furthermore, the AA kernel function is shown to be a positive definite kernel over binary spaces, which means it is a reproducing kernel Hilbert space (RKHS) method \cite{M2013} and therefore can leverage a representer theorem. The estimator has been shown to work well for sparse data \cite{GH1993}, but suffers from a flexibility limitation because it has one smoothing parameter for all hypercube dimensions. This rigidity compromises the estimator in cases where densities depend more on agreements along some dimensions than others. 

An alternative approach to binary density estimation uses a weighted sum of orthogonal functions, typically Walsh functions, to find density components \cite{LK1985,CLK1989}. This approach estimates coefficients of a density function's Fourier-Walsh expansion. Since $2^n$ functions exist for $n$ dimensions, optimizing every Walsh coefficient becomes impossible when $n$ is large. At the expense of the method's high flexibility, one may estimate coefficients sparingly or in groups. A notable example of the latter involves using recursive block thresholding to find coefficients in probabilistic polynomial time under certain sparsity conditions \cite{RSLW2013}. Even at extremes of this trade-off, the Fourier-Walsh approach requires more computation than the AA kernel estimator, but proves more versatile due to its higher parameter count. 

While the Fourier-Walsh and AA kernel estimators appear fundamentally distinct, this article shows that a simple transformation of a restricted version of the former yields the latter. It generalizes such transformations with a guarantee of normalization, yielding a powerful binary density estimator whose parameterizations place it on various points of the speed-flexibility trade-off. The resulting estimator uses an interpretable kernel that measures similarity between two $n$-length binary vectors by a signed and weighted sum of $2^n$ variable products acted upon by a monotonic function.

Ways to enhance practical usage of Fourier-Walsh and AA kernel estimators also become apparent. By transparently matching Fourier-Walsh coefficients to corresponding variable products, the construction enables an interpretability-first approach to prioritizing Walsh coefficient optimizations. For the AA kernel, the general estimator's form elucidates an extension of the method to a more flexible dimension-weighted form without compromising its normalization or positive definiteness.

The article is structured as follows: It first demonstrates an intuitive derivation of naturally-ordered Walsh matrices based on how they translate probabilities over the $\{-1,1\}^n$ hypercube -- mapped to $2^n$-length probability vector -- onto expectation values of binary variables and products among them. It justifies using shrinkage coefficients when estimating these expectations from data and shows the proportionality of these to Walsh coefficients. This yields a matrix diagonalization formulation of the Fourier-Walsh estimator with interpretable Walsh coefficients as eigenvalues. From here the article presents an element-wise monotonic transformation of this diagonalization, using special Walsh matrix properties to guarantee the resulting estimator's normalization. This generalized form allows the estimator to incorporate the wide range of activation functions common in machine learning, by themselves or as mixtures. It is then shown that the exponential transformation case with restricted Walsh coefficients yields an AA kernel matrix. Relaxing constraints on the pre-transformation Walsh coefficients introduces a variable-weighted extension of the AA kernel that retains non-negativity and positive definiteness. Following this, the article compares the times required to evaluate to different leave-one-out cross-validation risk functions across variants of the general estimator. It concludes with a discussion of regimes where variants of the general estimator apply and future work invited by the presented estimation approach.

\section{Estimation Theory}

Binary density estimators take in, as inputs, observed binary data points in $n$ dimensions assumed to be i.i.d samples of a binary random variable $X=(X_1,\dots,X_n)$, $X_i\in\{-1,1\}$. The data are used to estimate elements of the distribution of $X$, consisting of $2^n$ nonnegative probability values that sum to one. Even without calculating estimates for all density elements, an effective estimator should guarantee nonnegativity and normalization of its complete output.

The naive approach to this problem simply estimates probabilities by their relative frequencies in the data. These estimates often severely overfit data and especially prove to be ineffective in high dimensions. Practical estimators instead use data to make extrapolative inferences on density elements beyond those corresponding to observations. To ensure reasonable density extrapolations, one can impose typical constraints of neutrality and symmetry. Neutrality means that the effects of new data on the density estimates are independent of existing data. Given two datasets $D$ and $D'$, for example, an estimator $\hat{f}^D_X$ with the property 
\begin{equation}
\hat{f}^{D\cup D'}_X=\frac{1}{|D|+|D'|}\left(|D|\hat{f}^D_X+|D'|\hat{f}^{D'}_X\right)  
\end{equation}
adheres to neutrality. Noting that dataset $D$ could contain a single observation, it becomes clear that this property restricts the estimator to a discrete kernel form. The symmetry constraint means that, given two distinct points in the hypercube $\{-1,1\}^n$, an observation at one point should affect the density estimate at the other in the same way as in the case with the points swapped. All estimators this article presents adhere to both constraints. 

\subsection{Fourier-Walsh Estimation in Matrix Form}

More than a hundred years ago, Joseph L. Walsh cleverly devised a complete set of orthogonal functions $\phi_k$ for $k\in\mathbb{N}$ that yield basis functions in discrete spaces of size $2^n,\:n\in\mathbb{N}$ \cite{W1923}. Specifically, Walsh functions allow one to represent a binary density $f$ defined for $x\in\{0,1\}^n$ as
\begin{equation}
f(x)=\sum_{k\in\{0,1\}^n}c_k\phi_k(x),\qquad\phi_k(x)=(-1)^{\sum_ix_ik_i}.  
\end{equation}
Here, the coefficients $c_k$ act similarly to the coefficients in a Fourier series in that each encodes information pertaining to the entire distribution rather than a single element. This means that estimating only a subset of the involved $2^n$ Walsh coefficients can yield meaningful densities over the entire binary hypercube. For this reason, early and recent research on binary densities involves estimation methods using the Walsh basis \cite{RSLW2013,LK1985,CLK1989}. This article refers to estimators in this class as "Fourier-Walsh" estimators.

Although Walsh coefficients often appear as abstract parameters of equal intrinsic importance in literature \cite{RSLW2013} they carry interpretable information when the components of $X$ are themselves meaningful. Through an intuitive re-derivation of the Walsh decomposition, it is shown that each corresponds to a product of elements in a unique subset of $\{X_1,\dots,X_n\}$ when using the binary support $X_i\in\{-1,1\}$. Thus, some Walsh coefficients carry meaning closely corresponding to input features, while others carry more contrived information encoding products of possibly many features.

Suppose a random vector $\mathbf{r}^{(n)}_X\in \{-1,1\}^{2^n}$ containing the products of all $2^n$ subsets (including the empty set) of $\{X_1,\dots,X_n\}$, generated recursively as
\begin{equation}
\mathbf{r}^{(0)}_X=[1],\qquad \mathbf{r}^{(n+1)}_X=\begin{bmatrix} \mathbf{r}^{(n)}_X \\ X_{n+1}\mathbf{r}^{(n)}_X\end{bmatrix}.
\end{equation}
From this construction, one can also recursively generate a matrix $W^{(n)}\in\{-1,1\}^{2^n\times 2^n}$ whose columns contain the support of $\mathbf{r}^{(n)}_X$. This takes the form 
\begin{equation}
W^{(0)}=[1],\qquad W^{(n+1)}=\begin{bmatrix} W^{(n)} & W^{(n)} \\ W^{(n)} & -W^{(n)}\end{bmatrix}\implies W^{(n)}=\bigotimes^n_{i=1}\begin{bmatrix} 1 & 1 \\ 1 & -1\end{bmatrix}. 
\end{equation}
Interestingly, this matrix is a naturally ordered (or Haramard) Walsh matrix. It is symmetric and has the important property $W^{(n)}\left[W^{(n)}\right]^\intercal=\left[W^{(n)}\right]^2=2^nI_{2^n}$. It follows from these definitions that $\mathbb{E}\left[\mathbf{r}^{(n)}_X\right]=W^{(n)}\mathbf{p}$, where
\begin{equation}
p_j=\mathbb{P}\left\{\mathbf{r}^{(n)}_X=W^{(n)}_{:,j}\right\}.
\end{equation}
Noting that the construction of $\mathbf{r}^{(n)}_X$ means $\left[\mathbf{r}^{(n)}_X\right]_{2^{k-1}+1}=X_k$, one can also write
\begin{equation}\label{eq:6}
p_j=\mathbb{P}\left\{\bigcap^n_{k=1}\left(X_k=W^{(n)}_{2^{k-1}+1,j}\right)\right\}.
\end{equation}
The vector $\mathbf{p}$ encodes the probability distribution of $X$ over the hypercube as a vector and must satisfy the constraints $\mathbf{1}^\intercal\mathbf{p}=1$ and $\mathbf{p}\succeq\mathbf{0}$. From here forward, $\hat{\mathbf{p}}$ refers to the estimator of $\mathbf{p}$. This vector mapping also defines a normalized "counts" vector $\mathbf{p}_k$ that encodes observed instances of $X$.

Given these definitions, $\left[W^{(n)}\mathbf{p}_k\right]_j$ encodes the sample mean of the product $\left[\mathbf{r}^{(n)}_X\right]_j$ and $\left[W^{(n)}\hat{\mathbf{p}}\right]_j$ the expectation of $\left[\mathbf{r}^{(n)}_X\right]_j$ associated with the estimation of the binary density. For now, let $\left[W^{(n)}\hat{\mathbf{p}}\right]_j$ be equal to $\left[W^{(n)}\mathbf{p}_k\right]_j$ multiplied by a shrinkage factor $b_j\in [0,1]$. This means $W^{(n)}\hat{\mathbf{p}}=\mathrm{diag}(\mathbf{b})W^{(n)}\mathbf{p}_k$, where $\mathbf{b}\in[0,1]^{2^n}$ is now a shrinkage vector and produces the estimator
\begin{equation}\label{eq:7}
\hat{\mathbf{p}}=\frac{1}{2^n}W^{(n)}\mathrm{diag}(\mathbf{b})W^{(n)}\mathbf{p}_k.
\end{equation}
This form requires constraints on $\mathbf{b}$ to ensure $\hat{\mathbf{p}}$ is a true density. Rearranging Equation \ref{eq:7} gives
$ \mathbf{b}=\mathrm{diag}^{-1}(W^{(n)}\mathbf{p}_k)W^{(n)}\hat{\mathbf{p}}.$
Since the first row of this matrix is trivially $\mathbf{1}^\intercal$, the equality means $\mathbf{1}^\intercal\hat{\mathbf{p}}=1\iff b_1=1$. For $\hat{\mathbf{p}}\succeq\mathbf{0}$ to hold, $\mathbf{b}$ must be a convex combination of the columns of $\left[\mathrm{diag}^{-1}(W^{(n)}\mathbf{p}_k)W^{(n)}\right]$. Geometrically, this means $\mathbf{b}_{2:2^n}$ must lie inside the simplex formed by the columns of $\left[\mathrm{diag}^{-1}(W^{(n)}\mathbf{p}_k)W^{(n)}\right]$ with its first row removed.

Equation \ref{eq:7} is equivalent to a Fourier-Walsh estimator with coefficients proportional to elements $b_j\left[W^{(n)}\mathbf{p}_k\right]_j$. Here, the constraint $b_j\in[0,1]$ is justified in addition to the aforementioned constraints on $\mathbf{b}$. The chosen binary basis $X_i\in\{-1,1\}$ means $\left[\mathbf{r}^{(n)}_X\right]_j\in\{-1,1\}$ and $\left[W^{(n)}\mathbf{p}_k\right]_j$ represents its mean from binary samples. Suppose an analogous univariate random variable $Y\in\{-1,1\}$ with $\mathbb{E}[Y]=q$ and $N$ observed outcomes of $Y$ with a sample mean $\bar{y}$. If one estimates $q$ using a factor $b$ as $\hat{q}=b\bar{y}$, then the square error minimizing $b^*$ given a true $q$ is
\begin{equation}
b^*=\underset{b}{\mathrm{argmin}}\:\mathbb{E}_q\left[\left(b\bar{y}-q\right)^2\right]=q\frac{\mathbb{E}\left[\bar{y}\right]}{\mathbb{E}\left[\bar{y}^2\right]}=\frac{q^2}{\mathbb{E}\left[\bar{y}^2\right]}=\frac{Nq^2}{(N-1)q^2+1}.
\end{equation}
This equation constrains $b^*\in[0,1]\:\forall\:q\in[-1,1],N\in\mathbb{N}$. As expected, $b^*(-1)=b^*(1)=1$ and $b^*(0)=0$; note that the optimal $b^*$ varies substantially over $q$ even for large values of $N$. This shrinkage technique works similarly to others in applied statistics, such as the James–Stein estimator \cite{JS1961} and lasso and ridge regressions. Note that the case $\mathbf{b}=\mathbf{1}$ corresponds to no regularization and reduces Equation \ref{eq:7} to $\hat{\mathbf{p}}=\mathbf{p}_k$, that is, the data frequency estimate. The fully regularized case of $\mathbf{b}=[1\:\mathbf{0}^\intercal]^\intercal$ yields the uniform estimate $\hat{\mathbf{p}}=\mathbf{1}/2^n$. A later section discusses using cross-validation to optimize $\mathbf{b}$ within these extremes.

The derivation of Equation \ref{eq:7} elucidates that $b_j$ regularizes the expectation of $\left[\mathbf{r}^{(n)}_X\right]_j$ used in the estimator. This means that $n\choose k$ elements of $\mathbf{b}$ -- and Walsh coefficients -- correspond to products of elements in subsets of $\{X_i,\dots,X_n\}$ of size $k\in\mathbb{N}$. Trivially $b_1$ corresponds to the empty set whose product is 1, further justifying setting $b_1=1$. In applications where variables $X_i$ reflect interpretable information, therefore, elements of $\mathbf{b}$ associated products of small subsets of $\{X_i,\dots,X_n\}$ carry more meaning. This creates an intuition hierarchy of Walsh coefficients that can inform optimization choices when $n$ is large. To specify this hierarchy, define $S^{(n)}_k\subset\mathbb{N}$ to be the set of indexes of $\mathbf{r}^{(n)}_X$ corresponding to products of $k$ variables. From the construction of $\mathbf{r}^{(n)}_X$, it follows that
\begin{equation}
S^{(n)}_0=\{1\},\qquad S^{(n+1)}_k = S^{(n)}_k\cap \left\{x+2^n\big|\:x\in S^{(n)}_{k-1}\right\}.
\end{equation}
Note that $S^{(n)}_1=\{2^{x-1}+1\:|\:x\in[n]\}$, which matches the indexes of $W^{(n)}$ elements in Equation \ref{eq:6}. An intuition-first approach to optimizing $\mathbf{b}$ prescribes prioritizing indexes in sets $S^{(n)}_k$ of low $k$. In a case where one considers any combination of more than three variables in $\{X_i,\dots,X_n\}$ uninterpretable, for example, only $(n^3+5n+6)/6$ out of the $2^n$ elements require optimization for an interpretable estimator. 

\subsection{Monotonic Transformations of Fourier-Walsh Diagonalization Elements}

At its core, the Fourier-Walsh estimator in matrix form defines a similarity metric between two points on the $\{-1,1\}^n$ hypercube. For hypercube points assigned to indexes $i$ and $j$ in $\mathbf{p}$, Equation \ref{eq:7} gives the kernel $K_{ij} = \mathbf{b}^\intercal(W^{(n)}_{:,i}\odot W^{(n)}_{:,j})/2^n$. This intuitive kernel sums elements of $\mathbf{b}$ with a factor $(\pm1)$ on $b_k$ depending on whether the two input points share the same $\left[\mathbf{r}^{(n)}_X\right]_k$; it then divides this by the number of elements. In many cases, monotonic transformations of this kernel, which preserve its interpretable ordering, can enhance it. For example, transforming elements $K_{ij}$ using a nonnegative function guarantees nonnegative density estimates without any restriction on $\mathbf{b}$. Furthermore, some transformations allow the kernel to be positive definite without the requirement $\mathbf{b}\succ\mathbf{0}$ as in Equation \ref{eq:7}.

Estimating binary densities using such transformed Fourier-Walsh matrices requires a guarantee of normalization. Here s fact specific to Fourier-Walsh matrices is proven to facilitate a normalized generalization of Equation \ref{eq:7} with monotonically transformed elements. Since products of Walsh functions are themselves Walsh functions, the columns and rows of Walsh matrices must be closed under element-wise multiplication. Given this fact, suppose a mapping matrix $\mathcal{M}^{(n)}\in\mathbb{N}^{2^n\times 2^n}$ where $W^{(n)}_{:,\mathcal{M}^{(n)}_{ij}} = W^{(n)}_{:,i}\odot W^{(n)}_{:,j}$.

\begin{lemma}\label{lem:1}
$\mathcal{M}^{(n)}$ exists and each of its rows and columns contain unique elements in $[2^n]\subset\mathbb{N}$.
\end{lemma}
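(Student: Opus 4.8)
The plan is to establish the lemma in two stages: first the well-definedness (existence and uniqueness) of each entry $\mathcal{M}^{(n)}_{ij}$, and then the Latin-square property that every row and every column is a permutation of $[2^n]$. Throughout I would lean on two facts already available: the stated closure of the columns of $W^{(n)}$ under the Hadamard product, and the orthogonality relation $\left[W^{(n)}\right]^2 = 2^nI_{2^n}$.

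For existence, closure guarantees that $W^{(n)}_{:,i}\odot W^{(n)}_{:,j}$ coincides with \emph{some} column of $W^{(n)}$, so at least one valid index exists. The orthogonality relation (together with symmetry) gives $\left[W^{(n)}\right]^\intercal W^{(n)} = 2^nI_{2^n}$, so any two distinct columns are orthogonal; were two columns equal, their inner product would be $2^n\neq 0$, a contradiction. Hence the columns are pairwise distinct and the index $\mathcal{M}^{(n)}_{ij}\in[2^n]$ is unique. In particular each diagonal entry selects the all-ones first column, since $W^{(n)}_{:,i}\odot W^{(n)}_{:,i}=\mathbf{1}=W^{(n)}_{:,1}$.

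For the uniqueness of entries within a fixed row $i$, the key device is that every entry of $W^{(n)}$ lies in $\{-1,1\}$ and is therefore its own inverse under multiplication, so Hadamard multiplication by a fixed column is an involution on $\{-1,1\}^{2^n}$. Concretely, suppose $\mathcal{M}^{(n)}_{ij}=\mathcal{M}^{(n)}_{ij'}$; then $W^{(n)}_{:,i}\odot W^{(n)}_{:,j}=W^{(n)}_{:,i}\odot W^{(n)}_{:,j'}$, and multiplying both sides element-wise by $W^{(n)}_{:,i}$ while using $W^{(n)}_{:,i}\odot W^{(n)}_{:,i}=\mathbf{1}$ yields $W^{(n)}_{:,j}=W^{(n)}_{:,j'}$, so $j=j'$ by distinctness of the columns. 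Thus $j\mapsto\mathcal{M}^{(n)}_{ij}$ is injective on the finite set $[2^n]$, hence a bijection, so row $i$ contains each element of $[2^n]$ exactly once. Since the Hadamard product is commutative we have $\mathcal{M}^{(n)}_{ij}=\mathcal{M}^{(n)}_{ji}$, so $\mathcal{M}^{(n)}$ is symmetric and the column statement follows immediately from the row statement.

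I expect the main obstacle to be the uniqueness step rather than existence, since existence is essentially handed over by the closure property; the crux is recognizing the self-inverse structure of the $\pm 1$ entries, which turns Hadamard multiplication by a fixed column into a cancellable involution and makes the injectivity argument a one-liner. An alternative route would induct on $n$ using the $2\times 2$ block recursion for $W^{(n)}$, propagating the Latin-square property across the four blocks, but the involution argument is cleaner and avoids that bookkeeping. If one wishes to make the structure fully transparent, I would also note that under the identification of column $j$ with the bit-string $k\in\{0,1\}^n$ indexing the Walsh function $\phi_k$, the relation $\phi_k\odot\phi_{k'}=\phi_{k\oplus k'}$ identifies $\mathcal{M}^{(n)}$ with the (bordered) Cayley table of $(\mathbb{F}_2^n,\oplus)$, for which the Latin-square property is the standard fact that left translation is a bijection.
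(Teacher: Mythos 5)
Your proof is correct, but it takes a genuinely different route from the paper's. The paper argues by induction on $n$ using the recursive block structure of $W^{(n)}$: starting from the base case $\mathcal{M}^{(0)}=[1]$, it verifies the defining property block by block to obtain the explicit recursion
$\mathcal{M}^{(n+1)}=\left[\begin{smallmatrix}\mathcal{M}^{(n)} & 2^nJ_{2^n}+\mathcal{M}^{(n)}\\ 2^nJ_{2^n}+\mathcal{M}^{(n)} & \mathcal{M}^{(n)}\end{smallmatrix}\right]$,
and observes that the Latin-square property propagates through this recursion; this proves existence (i.e., closure) and the row/column property simultaneously and from scratch, and yields an explicit formula for $\mathcal{M}^{(n)}$ as a by-product. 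You instead isolate the algebraic mechanism: column distinctness from $\left[W^{(n)}\right]^2=2^nI_{2^n}$, plus the fact that $\pm1$ entries are self-inverse, so Hadamard multiplication by a fixed column is a cancellable involution, making $j\mapsto\mathcal{M}^{(n)}_{ij}$ injective and hence bijective on $[2^n]$; symmetry of $\odot$ then transfers the row statement to columns. This is cleaner and more general --- it applies to any symmetric $\pm1$ matrix with orthogonal columns whose column set is closed under $\odot$ --- and your closing identification of $\mathcal{M}^{(n)}$ with the Cayley table of $(\mathbb{F}_2^n,\oplus)$ makes explicit the group structure the paper leaves implicit. One caveat: your existence step imports the closure fact from the paper's prose, but that prose remark is exactly what the lemma's existence clause formalizes and what the paper's induction actually \emph{proves}; as written, your main argument assumes that clause rather than establishing it. To be fully self-contained you should promote your final remark --- the character identity $\phi_k\phi_{k'}=\phi_{k\oplus k'}$, or equivalently the Cayley-table observation --- from an aside to the proof of existence; with that change the proof is complete.
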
 

\begin{proof}
In the base case $n=0$, it is evident that $W^{(0)}=[1]\implies\mathcal{M}^{(0)}=[1]$. From the recursive construction of $W^{(n)}$, the following hold true for indexes $i,j\in\{1,\dots,2^n\}$:
\begin{equation}
W^{(n)}_{:,i}\odot W^{(n)}_{:,j}=W^{(n)}_{:,\mathcal{M}^{(n)}_{ij}}\implies
\begin{cases} 
W^{(n+1)}_{:,i}\odot W^{(n+1)}_{:,j}=W^{(n+1)}_{:,2^n+i}\odot W^{(n+1)}_{:,2^n+j}=W^{(n+1)}_{:,\mathcal{M}^{(n)}_{ij}} \\
W^{(n+1)}_{:,i}\odot W^{(n+1)}_{:,2^n+j}=W^{(n+1)}_{:,2^n+i}\odot W^{(n+1)}_{:,j}=W^{(n+1)}_{:,2^n+\mathcal{M}^{(n)}_{ij}}
\end{cases}.
\end{equation}
Thus,
\begin{equation}\label{eq:11}
\mathcal{M}^{(n+1)}=\begin{bmatrix}
\mathcal{M}^{(n)} & 2^nJ_{2^n}+\mathcal{M}^{(n)} \\
2^nJ_{2^n}+\mathcal{M}^{(n)} & \mathcal{M}^{(n)}
\end{bmatrix},
\end{equation}
where $J_m$ is the $m\times m$ all-ones matrix. Note from Equation \ref{eq:11} that if every row and every column of $\mathcal{M}^{(n)}$ contains all integers $[2^n]$, then $\mathcal{M}^{(n+1)}$ will have the same property for integers $[2^{n+1}]$. Because this is true for the base case $\mathcal{M}^{(0)}=[1]$ it holds for all $\mathcal{M}^{(n)}$. This concludes the proof.
\end{proof}

With the mapping matrix $\mathcal{M}^{(n)}$ defined, one can write
\begin{equation}
\left[W^{(n)}\mathrm{diag}(\mathbf{b})W^{(n)}\right]_{ij}=\sum^{2^n}_{k=1}b_kW^{(n)}_{ik}W^{(n)}_{jk}=\mathbf{b}^\intercal\left(W^{(n)}_{:,i}\odot W^{(n)}_{:,j}\right)=\mathbf{b}^\intercal W^{(n)}_{:,\mathcal{M}^{(n)}_{ij}}.
\end{equation}
By Lemma \ref{lem:1}, this means that all rows and columns of $W^{(n)}\mathrm{diag}(\mathbf{b})W^{(n)}$ contain the same elements. This also applies to element-wise transformations of $W^{(n)}\mathrm{diag}(\mathbf{b})W^{(n)}$.
Denoting $(Q)^f$ to mean the element-wise action of $f:\mathbb{R}\to\mathbb{R}$ on matrix $Q$, This fact means
\begin{equation}
\left(W^{(n)}\mathrm{diag}(\mathbf{b})W^{(n)}\right)^f\mathbf{1}=\left[\left(W^{(n)}\mathrm{diag}(\mathbf{b})W^{(n)}\right)^f\mathbf{1}\right]_1\mathbf{1}.
\end{equation}
Since $W_{:,1}=\mathbf{1}$, it follows that
\begin{equation}
\left[\left(W^{(n)}\mathrm{diag}(\mathbf{b})W^{(n)}\right)^f\mathbf{1}\right]_1=\sum^{2^n}_{j=1}f\left(\sum^{2^n}_{k=1}b_kW^{(n)}_{1k}W^{(n)}_{jk}\right)=\left(\mathbf{b}^\intercal W^{(n)}\right)^f\mathbf{1}.
\end{equation}
This normalization factor yields the complete general-form estimator
\begin{equation}\label{eq:15}
\hat{\mathbf{p}}=\frac{\left(W^{(n)}\mathrm{diag}(\mathbf{b})W^{(n)}\right)^f}{\left(\mathbf{b}^\intercal W^{(n)}\right)^f\mathbf{1}}\mathbf{p}_k.
\end{equation}
Note that $f(x)=x$ and $b_1=1$ yield $\left(\mathbf{b}^\intercal W^{(n)}\right)^f\mathbf{1}=\mathbf{b}^\intercal W^{(n)}\mathbf{1}=2^n$, which is consistent with Equation \ref{eq:7}.

Elements of the matrix in the numerator of Equation \ref{eq:15} are evaluated in time proportional to the number of nonzero elements in $\mathbf{b}$ regardless of $f$. However, the normalization factor for nontrivial $\mathbf{b}$ and nonlinear $f$ typically takes $O(n2^n)$ time to calculate using the Fast Walsh Transform algorithm \cite{HB2011}, regardless of the number of nonzero elements in $\mathbf{b}$. In the cases of logistic and exponential transformations, a certain restriction on $\mathbf{b}$ can greatly reduce the normalization times to $O(1)$ and $O(n)$ respectively. These improvements are referenced in Table~\ref{tab:table}.

Define a vector $\mathbf{w}\in[0,1]^n$ and an associated $\mathbf{b}_{\mathbf{w}}$ such that $\mathbf{b}_{\mathbf{w}}=\sum^n_{k=1}w_k\hat{e}_{2^{k-1}+1}$; this means $\mathbf{b}_{\mathbf{w}}$ has nonzero elements only at indexes in $S^{(n)}_1$. Since all column vectors in $\{W^{(n)}_{:,k}|k\in S^{(n)}_1\}$ are anti-symmetric -- in the sense $\mathbf{v}=-\mathrm{flip}(\mathbf{v})$ -- it follows that $W^{(n)}\mathbf{b}_{\mathbf{w}}$ is also anti-symmetric. From here, the propriety of the sigmoid/logistic function $f(x)+f(-x)=1$ allows one to write
\begin{equation}
f=\frac{1}{1+\gamma^{-x}}\implies \mathbf{1}^\intercal\left(W^{(n)}\mathbf{b}_{\mathbf{w}}\right)^f=\left(\mathbf{b}^\intercal_{\mathbf{w}}W^{(n)}\right)^f\mathbf{1}=2^n/2,
\end{equation}
$\forall\:\gamma\in\mathbb{R}_+$. This is evaluated in $O(1)$ time. Turning to the exponential case, one notes that
\begin{equation}
W^{(n)}\mathbf{b}_{\mathbf{w}} =
\begin{bmatrix} 1 \\ 1 \end{bmatrix}\otimes W^{(n-1)}\left[\mathbf{b}_{\mathbf{w}}\right]_{1:2^{n-1}}W^{(n-1)}+w_n\begin{bmatrix} \mathbf{1} \\ -\mathbf{1} \end{bmatrix}.
\end{equation}
This means 
\begin{equation}
f(x) = \gamma^{x}\implies 
\left(W^{(n)}\mathbf{b}_{\mathbf{w}}\right)^f = \begin{bmatrix} \gamma^{w_n} \\ \gamma^{-w_n} \end{bmatrix}\otimes \left(W^{(n-1)}\left[\mathbf{b}_{\mathbf{w}}\right]_{1:2^{n-1}}W^{(n-1)}\right)^f=\bigotimes^{n-1}_{i=0}\begin{bmatrix} \gamma^{w_{n-i}} \\ \gamma^{-w_{n-i}} \end{bmatrix}
\end{equation}
and, finally,
\begin{equation}\label{eq:19}
f(x) = \gamma^{x}\implies=\left(\mathbf{b}^\intercal_{\mathbf{w}}W^{(n)}\right)^f\mathbf{1}=\prod^n_{i=1}\left(\gamma^{w_i}+\gamma^{-w_i}\right).
\end{equation}
Equation \ref{eq:19} evaluates in $O(n)$ time.

The powerful flexibility of kernel transformation enables this binary density estimator to employ the wide range of activation functions used in applied machine learning. These include exponential, logistic/sigmoid, step, ReLU, $\tanh$, ELU functions, and many others. The nonnegative natures of the first four listed functions make them especially useful in guaranteeing nonnegative density estimates. The choices among these functions can depend on the performance of cross-validation, the context of the application, and the desired evaluation speed. 

An additional benefit of the presented matrix formulation is that convex combinations of normalized matrices may also be used, allowing mixtures of transformed kernel estimators. Such a mixed estimator would take the form
\begin{equation}
\hat{\mathbf{p}}=\left[\sum^m_{i=1}c_i\frac{\left(W^{(n)}\mathrm{diag}(\mathbf{b}_i)W^{(n)}\right)^{f_i}}{\left(\mathbf{b}^\intercal_i W^{(n)}\right)^{f_i}\mathbf{1}}\right]\mathbf{p}_k,
\end{equation}
where $\sum^m_{i=1}c_i=1$, $c_i>0$.

\subsection{Aitchison Aitken Kernel from Exponential Fourier-Walsh Matrix}

This section demonstrates that using an exponential function in Equation \ref{eq:15} with $\mathbf{b}=\mathbf{b}_\mathbf{w}$ type restrictions yields the AA kernel estimator; this is despite the seemingly fundamental differences between the AA and Fourier-Walsh approaches to estimation. The AA kernel gives a similarity metric between two $n$-length binary vectors $\mathbf{x}_i$ and $\mathbf{x}_j$ with parameter $\lambda\in [1/2,1]$ as
$$K(\mathbf{x}_i,\mathbf{x}_j;\lambda)=\lambda^{n-d(\mathbf{x}_i,\mathbf{x}_j)}(1-\lambda)^{d(\mathbf{x}_i,\mathbf{x}_j)},$$
where $d(\mathbf{x}_i,\mathbf{x}_j)$ gives the number of elements where $\mathbf{x}_i$ and $\mathbf{x}_j$ differ. This kernel guarantees normalization and nonnegative density estimation \cite{AA1976}. Supposing that $\mathbf{x}_i,\mathbf{x}_j\in \{-1,1\}^n$, one can write
\begin{equation}
\mathbf{x}^\intercal_i\mathbf{x}_j=\left[n-d(\mathbf{x}_i,\mathbf{x}_j)\right]-d(\mathbf{x}_i,\mathbf{x}_j)\implies d(\mathbf{x}_i,\mathbf{x}_j)=\frac{n-\mathbf{x}^\intercal_i\mathbf{x}_j}{2},\quad n-d(\mathbf{x}_i,\mathbf{x}_j)=\frac{n+\mathbf{x}^\intercal_i\mathbf{x}_j}{2}.
\end{equation}
The AA kernel equation now becomes
\begin{equation}\label{eq:21}
K(\mathbf{x}_i,\mathbf{x}_j;\lambda)=\sqrt{\lambda}^{n+\mathbf{x}^\intercal_i\mathbf{x}_j}\sqrt{1-\lambda}^{n-\mathbf{x}^\intercal_i\mathbf{x}_j}=\sqrt{\lambda(1-\lambda)}^n\sqrt{\frac{\lambda}{1-\lambda}}^{\mathbf{x}^\intercal_i\mathbf{x}_j}.
\end{equation}
Using the binary hypercube to $2^n$-vector mapping implicit in Equation \ref{eq:6}, one can write $\mathbf{x}^\intercal_i\mathbf{x}_j=\left[W^{(n)}\mathrm{diag}(\mathbf{b}_{\mathbf{1}})W^{(n)}\right]_{ij}$, where $\mathbf{b}_{\mathbf{1}}$ corresponds to $\mathbf{b}_{\mathbf{w}}$ in the case $\mathbf{w}=\mathbf{1}$. Using Equation \ref{eq:19}, one notes that
\begin{equation}
f(x)=\sqrt{\frac{\lambda}{1-\lambda}}^x\implies \left[\left(\mathbf{b}^\intercal_{\mathbf{1}}W^{(n)}\right)^f\mathbf{1} \right]^{-1}=\left(\sqrt{\frac{\lambda}{1-\lambda}}+\sqrt{\frac{1-\lambda}{\lambda}}\right)^{-n}=\sqrt{\lambda(1-\lambda)}^{n},
\end{equation}
in agreement with the AA kernel normalization in Equation \ref{eq:21}. Therefore, the AA kernel estimator is equivalent to 
\begin{equation}
\hat{\mathbf{p}}_{\mathrm{AAK}}=
\frac{\left(W^{(n)}\mathrm{diag}(\mathbf{b}_{\mathbf{1}})W^{(n)}\right)^f}{\left(\mathbf{b}^\intercal_{\mathbf{1}} W^{(n)}\right)^f\mathbf{1}}\mathbf{p}_k
,\qquad f(x)=\sqrt{\frac{\lambda}{1-\lambda}}^x,
\end{equation}
a restricted form of the estimator in Equation \ref{eq:15}. 

Given this equation, one may relax $\mathbf{b}$ from $\mathbf{b}_{\mathbf{1}}$ to $\mathbf{b}_{\mathbf{w}}$ to yield a variable-weighted version of the AA kernel. This allows the elements of $\mathbf{w}$ to parameterize different "smoothing" levels along the $n$ dimensions of the hypercube. A variable-weighted kernel reflects realistic cases where similarities of two points in some indexes matter more to overall similarity than similarities in other indexes. For convenience, a reparameterization $\gamma = \sqrt{\lambda/(1-\lambda)}$ is introduced. The bounds $\lambda\in[1/2,1]$ correspond to $\gamma \in[1,\infty)$. Letting $\mathbf{b}_{\mathbf{1}}\to\mathbf{b}_{\mathbf{w}}$ yields the weighted AA kernel estimator
\begin{equation}\label{eq:24}
\hat{\mathbf{p}}_{\mathrm{WAAK}}=\frac{\left(W^{(n)}\mathrm{diag}(\mathbf{b}_{\mathbf{w}})W^{(n)}\right)^f}{\left(\mathbf{b}^\intercal_{\mathbf{w}} W^{(n)}\right)^f\mathbf{1}}\mathbf{p}_k,\qquad f(x)=\gamma^x.
\end{equation}

It is now shown that one may write the numerator of Equation \ref{eq:24} as a series of Kronecker products, easing computation and elucidating certain properties about the kernel's positive definiteness. Noting that the product $W^{(n-1)}\mathrm{diag}\left([a\:\mathbf{0}^\intercal]\right)W^{(n-1)}$ gives $aJ_{2^{n-1}}$, the recursive form of $W^{(n)}$ means
\begin{equation}
W^{(n)}\mathrm{diag}(\mathbf{b}_{\mathbf{w}})W^{(n)} = 
\begin{bmatrix} 1 & 1 \\ 1 & 1 \end{bmatrix}\otimes W^{(n-1)}\mathrm{diag}\left([\mathbf{b}_\mathbf{w}]_{1:2^{n-1}}\right)W^{(n-1)}+\begin{bmatrix} 1 & -1 \\ -1 & 1 \end{bmatrix}w_nJ_{2^{n-1}}.
\end{equation}
This means
\begin{equation}
f(x)=\gamma^x\implies\left(W^{(n)}\mathrm{diag}(\mathbf{b}_\mathbf{w})W^{(n)}\right)^f=
\begin{bmatrix}
\gamma^{w_n} & \gamma^{-w_n} \\ \gamma^{-w_n} & \gamma^{w_n}
\end{bmatrix}\otimes\left(W^{(n-1)}\mathrm{diag}([\mathbf{b}_\mathbf{w}]_{1:2^{n-1}})W^{(n-1)}\right)^f.
\end{equation}
Continuing the recursion, the complete weighted AA kernel estimator can be written as 
\begin{equation}\label{eq:27}
\hat{\mathbf{p}}_{\mathrm{WAAK}}=\left[\prod^n_{j=1}\left(\gamma^{w_j}+\gamma^{-w_j}\right)\right]^{-1}\bigotimes^{n-1}_{i=0}\begin{bmatrix}
\gamma^{w_{n-i}} & \gamma^{-w_{n-i}} \\ \gamma^{-w_{n-i}} & \gamma^{w_{n-i}}
\end{bmatrix}\mathbf{p}_k.
\end{equation}
In this form, elements of the weighted AA kernel matrix can be evaluated in $O(n)$ time. Note also that since a $2\times 2$ matrix with on-diagonal elements $\gamma^w$ and off diagonal elements $\gamma^{-w}$ has eigenvalues $\gamma^w-\gamma^{-w}$ and $\gamma^w+\gamma^{-w}$, the matrices in the Kronecker product in Equation \ref{eq:27} are all positive definite when $\mathbf{w}\succ\mathbf{0}$ and $\gamma>1$. Since a Kronecker product of two positive definite matrices is also positive definite, it follows that the weighted AA kernel estimator uses a positive definite kernel function. This analysis rederives the known result that the AA kernel is positive definite \cite{M2013} and extends this property to its variable-weighted generalization.

\section{Cross Validation}

The flexibility of the presented general-form estimator is driven by its variable, and possibly high, number of parameters. These include elements of $\mathbf{b}$, parameters defining $f$, and weights $c_i$ applied to matrices when mixing estimators. Practical optimization of these parameters typically involves a cross-validation scheme. This section outlines the common leave-one-out approach to cross-validation using the squared error (SE) and Kullback-Leibler (KL) loss functions. It specifically focuses on risk function evaluation times across variants of the general-form estimator.

For the optimizations described, suppose a "true" probability distribution $\mathbf{p}$ and the estimator $\hat{\mathbf{p}}_\lambda$ parameterized by $\lambda$. The optimal $\lambda^*$ minimizes the expectation of a loss function $L(\mathbf{p}, \hat{\mathbf{p}}_\lambda)$ i.e., the risk. The SE and KL loss functions are respectively defined as
\begin{equation}
L_{\mathrm{SE}}(\mathbf{p}, \hat{\mathbf{p}}_\lambda)=||\hat{\mathbf{p}}_\lambda-\mathbf{p}||^2_2,\qquad L_{\mathrm{KL}}(\mathbf{p}, \hat{\mathbf{p}}_\lambda)=\sum^{2^n}_{j=1}\mathbf{p}_j\log\left(\mathbf{p}_j\big/\left[\hat{\mathbf{p}}_\lambda\right]_j\right).
\end{equation}
To implement the leave-one-out technique, let $K\in[2^n]$ be a multiset encoding indexes of observations. Also, define $\hat{\mathbf{p}}^{(k)}_\lambda$ as the estimate made without an observed data point corresponding to $k\in K$. By the law of the unconscious statistician,
\begin{equation}
\mathbb{E}\left[\hat{\mathbf{p}}^\intercal_\lambda\mathbf{p}\right]=\frac{1}{|K|}\sum_{k\in K}\mathbb{E}\left[\left(\hat{\mathbf{p}}^{(k)}_\lambda\right)^\intercal\hat{e}_k\right]\implies \arg\min_\lambda \mathbb{E}\left[L_{\mathrm{SE}}(\mathbf{p}, \hat{\mathbf{p}}_\lambda)\right]=\arg\min_\lambda\left[\hat{\mathbf{p}}^\intercal_\lambda\hat{\mathbf{p}}_\lambda-\frac{2}{|K|}\sum_{k\in K}\left(\hat{\mathbf{p}}^{(k)}_\lambda\right)^\intercal\hat{e}_k\right].
\end{equation}
For the KL loss, the cross-validation optimization is given by 
\begin{equation}
\arg\min_\lambda \mathbb{E}\left[L_{\mathrm{KL}}(\mathbf{p}, \hat{\mathbf{p}}_\lambda)\right]=\arg\max_\lambda \sum_{k\in K}\log\left[\left(\hat{\mathbf{p}}^{(k)}_\lambda\right)^\intercal\hat{e}_k\right],
\end{equation}
using a discrete adaptation of the known KL cross validation optimizer for density estimation over continuous spaces \cite{H1987}.

One can now impose the familiar form $\hat{\mathbf{p}}_\lambda=Q_\lambda\mathbf{p}_k=Q_\lambda\left[(1/|K|)\sum_{k\in K}\hat{e}_k\right]$ for some generic symmetric estimator matrix $Q_\lambda\in\mathbb{R}^{2^n\times 2^n}$. This means that evaluating $\left(\hat{\mathbf{p}}^{(k)}_\lambda\right)^\intercal\hat{e}_k$ involves knowing $|K|-1$ elements of $Q_\lambda$. Therefore, the computation of the KL risk function requires the evaluation of $|K|(|K|-1)/2$ elements of $Q_\lambda$. In the SE risk case, the second term also involves knowing $|K|(|K|-1)/2$ elements of $Q_\lambda$, while the first requires evaluating $|K|(|K|+1)/2$ elements of $Q^2_\lambda$. 
Squared-matrix elements are particularly simple to calculate for numerators of Equations \ref{eq:7} and \ref{eq:27} since
\begin{equation}
\left(W^{(n)}\mathrm{diag}(\mathbf{b})W^{(n)}\right)^2=2^n\left(W^{(n)}\mathrm{diag}(\mathbf{b}\odot\mathbf{b})W^{(n)}\right)
\end{equation}
and 
\begin{equation}
f(x)=\gamma^x\implies\left[\left(W^{(n)}\mathrm{diag}(\mathbf{b}_\mathbf{w})W^{(n)}\right)^f\right]^2=
\bigotimes^{n-1}_{i=0}\begin{bmatrix}
\gamma^{2w_{n-i}}+\gamma^{-2w_{n-i}} & 2 \\ 2 & \gamma^{2w_{n-i}}+\gamma^{-2w_{n-i}}
\end{bmatrix},
\end{equation}
by the Kronecker mixed product property.

This section concludes with a tabulation of the computation times required to normalize and evaluate matrix and squared-matrix elements of useful estimators derived from restrictions of Equation \ref{eq:15}. Table~\ref{tab:table} shows these computation times, where $b$ gives the number of nonzero (or non-constant) elements of $\mathbf{b}$ and the fast Walsh transform \cite{HB2011} is used for listed operations that take time $O(n2^n)$.

\begin{table}
\caption{Evaluation times of normalization and finding one element of an estimator matrix and its square as functions of $n$, the hypercube dimension, and $b$, the number of non-constant diagonal elements. These are reported for the general-form estimator in Equation \ref{eq:15} for various monotonic functions and restrictions on $\mathbf{b}$.}
\centering
\begin{tabular}{llll}
    \toprule
    Estimator Restrictions & 
    Normalization time &
    Matrix element time &
    Squared-Matrix element time \\
    \midrule
    $f(x)=x, b_1=1$  & 
    $O(1)$ & 
    $O(b)$ & 
    $O(b)$ \\
    $f(x)=\gamma^x, \mathbf{b}=\mathbf{b}_\mathbf{w}$  &
    $O(n)$ & 
    $O(n)$ & 
    $O(n)$ \\
    $f(x)=\left(1+\gamma^{-x}\right)^{-1}, \mathbf{b}=\mathbf{b}_\mathbf{w}$ &
    $O(1)$ & 
    $O(n)$ & 
    $O(n2^n)$ \\
    No restriction &
    $O(n2^n)$ & 
    $O(b)$ & 
    $O(n2^n)$ \\
    \bottomrule
\end{tabular}
\label{tab:table}
\end{table}

\section{Summary and Discussion}

This article presents a powerful binary density estimator built on element-wise monotonic transformations of Fourier-Walsh diagonalizations. To accomplish this, the article first provides an intuitive rederivation of the Fourier-Walsh decomposition in the form of a diagonalization. In this form, Walsh coefficients are shown to relate to unique products of constituent univariate binary variables. A specific property of Walsh matrices is then shown that enables normalization of an estimator arising from any elementwise transformation function. It is then elucidated how the AA kernel arises from the above process with a generic exponential transformation, and a variable-weighted extension of the kernel is introduced that retains its desirable properties. Finally, the implementations of leave-one-out cross-validation risk functions are outlined for squared error and Kullback-Leibler loss functions and their computation times are compared across estimators. The flexibility, speed, and interpretability of this new estimator under different constraints make it an ideal candidate for use in a wide range of estimation and learning applications.

The comparison made in Table~\ref{tab:table} shows that variants of the proposed estimator under different constraints serve best in different regimes of data science. For problems of up to $n\approx 20$ dimensions -- i.e. binary inputs -- computers can safely handle $O(n2^n)$ time operations and the estimator in Equation \ref{eq:15} may apply in its most general form. Learning in this setting could involve iterating over a large number of transformation functions and exploring mixtures of several different estimation matrices. Problems of approximately 20 dimensions and 40 data points have been cited as typical in applied binary density estimation \cite{GH1993}. At the other extreme, one could consider a high-dimensional case of $n$ up to $n\approx 10^4$. Here, any approach other than the introduced weighted AA kernel and untransformed Fourier-Walsh diagonalization -- with heavily restricted $\mathbf{b}$ -- becomes highly intractable. 

These extremes not only showcase the high versatility of the general form estimator, but also invite the possibility of variable selection when faced with faced with a learning task. Suppose, for example, that 100 variables encode five response variables and 95 regressors of varying inference importance. One could first employ a direct Fourier-Walsh diagonalization estimator and optimize only elements of $\mathbf{b}$ in $S^{(95)}_1$, $S^{(95)}_2$, and $S^{(95)}_3$. From these optimized quantities, one could find the set of 20 binary regressor variables most correlated with the response variables and then apply less regularized and restricted estimator variants using only these binary inputs. Such "variable search" approaches made possible by the presented estimator can make it powerful in the realm of machine learning over massive binary spaces. Methods to select hypercubes over which to estimate practical densities could be the focus of exciting future research.

\bibliographystyle{unsrt}

\begin{thebibliography}{10}

\bibitem{AA1976}
J.~Aitchison and C.~G.~G. Aitken.
\newblock Multivariate binary discrimination by the kernel method.
\newblock {\em Biometrika}, 63(3):413–420, 1976.

\bibitem{SZ2002}
I.~Shmulevich and W.~Zhang.
\newblock Binary analysis and optimization-based normalization of gene
  expression data.
\newblock {\em Bioinformatics}, 18(4):555--565, 2002.

\bibitem{JR2019}
C.~Jentsch and L.~Reichmann.
\newblock Generalized binary time series models.
\newblock {\em Econometrics}, 7(4):47--73, 2019.

\bibitem{ZSH2003}
B.~Zhang, S.~N. Srihari, and C.~Huang.
\newblock Word image retrieval using binary features.
\newblock In {\em Proc. SPIE 5296, Document Recognition and Retrieval XI},
  volume 5296, 2003.

\bibitem{LLZZ2015}
J.~Lu, V.~E. Liong, X.~Zhou, and J.~Zhou.
\newblock Learning compact binary face descriptor for face recognition.
\newblock {\em IEEE Transactions on Pattern Analysis and Machine Intelligence},
  10(1):2041--2056, 2015.

\bibitem{SJ2018}
M.~Saleh and R.~Le~Bouquin Jeannes.
\newblock An efficient machine learning-based fall detection algorithm using
  local binary features.
\newblock In {\em 2018 26th European Signal Processing Conference (EUSIPCO)},
  2018.

\bibitem{M2013}
H.~Mussa.
\newblock The aitchison and aitken kernel function revisited.
\newblock {\em Journal of Mathematics Research}, 5(1):22--25, 2013.

\bibitem{GH1993}
B.~Grund and P.~Hall.
\newblock On the performance of kernel estimators for high-dimensional, sparse
  binary data.
\newblock {\em Journal of Multivariate Analysis}, 44(2):321--344, 1993.

\bibitem{LK1985}
W.~Liang and P.~R. Krishnaiah.
\newblock Nonparametric iterative estimation of multivariate binary density.
\newblock {\em Journal of Multivariate Analysis}, 16(1):162--172, 1985.

\bibitem{CLK1989}
X.~R. Chen, P.~R. Krishnaiah, and W.~Liang.
\newblock Estimation of multivariate binary density using orthogonal functions.
\newblock {\em Journal of Multivariate Analysis}, 31(2):178--186, 1989.

\bibitem{RSLW2013}
M.~Raginsky, J.~G. Silva, S.~Lazebnik, and R.~Willett.
\newblock A recursive procedure for density estimation on the binary hypercube.
\newblock {\em Electronic Journal of Statistics}, 7:820--858, 2013.

\bibitem{W1923}
J.~L. Walsh.
\newblock A closed set of normal orthogonal functions.
\newblock {\em American Journal of Mathematics}, 45(1):5--24, 1923.

\bibitem{JS1961}
W.~James and C.~Stein.
\newblock Estimation with quadratic loss.
\newblock {\em Berkeley Symp. on Math. Statist. and Prob.}, 4(1):361--379,
  1961.

\bibitem{HB2011}
M.~T. Hamood and S.~Boussakta.
\newblock Fast walsh–hadamard–fourier transform algorithm.
\newblock {\em IEEE Transactions on Signal Processing}, 59(11):5627 -- 5631,
  2011.

\bibitem{H1987}
P.~Hall.
\newblock On kullback-leibler loss and density estimation.
\newblock {\em Annals of Statistics}, 15(4):1491--1519, 1987.

\end{thebibliography}

\end{document}